\documentclass[letterpaper, 10 pt, conference]{ieeeconf}  

\IEEEoverridecommandlockouts                              

\RequirePackage{xcolor}

\usepackage{cite}
\usepackage{amsmath,amssymb,amsfonts}

\usepackage{multirow}
\usepackage{hhline}

\usepackage{url}
\usepackage{tikz}
\usepackage{pgfplots}
\usepackage{grffile}
\usepackage{stfloats}
\pgfplotsset{compat=newest}
\usepgfplotslibrary{groupplots}   
\usetikzlibrary{plotmarks}
\usetikzlibrary{arrows.meta}
\usepgfplotslibrary{patchplots}
\usepackage{siunitx}
\usetikzlibrary{shapes,arrows}
\tikzstyle{block} = [draw, fill=white, rectangle, minimum height=3em, minimum width=4em]
\tikzstyle{sum} = [draw, fill=white, circle, node distance=1cm]
\usepackage{todonotes}
\usepackage{graphicx}
\usepackage{diagbox}
\usepackage{textcomp}
\let\labelindent\relax
\usepackage{enumitem}
\usepackage{siunitx}
\usepackage{booktabs}
\usepackage[hidelinks]{hyperref}
\usepackage[noabbrev]{cleveref}

\usepackage{mathtools}
\usepackage{mleftright}
\mleftright

\usepackage{algorithm}
\usepackage[noend]{algpseudocode}
\algrenewcommand\algorithmicindent{1em}
\renewcommand{\algorithmicrequire}{\textbf{Input:}}

\newtheorem{lem}{Lemma}
\newtheorem{thm}{Theorem}
\newtheorem{cor}{Corollary}

\newcommand{\bbR}{\mathbb{R}}
\newcommand{\bbN}{\mathbb{N}}
\newcommand{\bbZ}{\mathbb{Z}}
\newcommand{\bbC}{\mathbb{C}}

\newcommand{\calO}{\mathcal{O}}

\newcommand{\calC}{\mathcal{C}}

\newcommand{\calN}{\mathcal{N}}

\newcommand{\bfc}{\mathbf{c}}

\newcommand{\bfg}{\mathbf{g}}
\newcommand{\bfU}{\mathbf{U}}
\newcommand{\bfs}{\mathbf{s}}

\newcommand{\sfB}{\mathsf{B}}

\newcommand{\Mult}{\mathsf{Mult}}

\newcommand{\RotCt}{\mathsf{RotCt}}
\newcommand{\RotVec}{\mathsf{RotVec}}
\newcommand{\Rot}{\mathsf{Rot}}

\newcommand{\KeyGen}{\mathsf{KeyGen}}
\newcommand{\Enc}{\mathsf{Enc}}
\newcommand{\Dec}{\mathsf{Dec}}

\newcommand{\sk}{\mathsf{sk}}

\newcommand{\pt}{\mathsf{pt}}
\newcommand{\ckks}{\mathsf{ct}}
\newcommand{\poly}{\mathsf{poly}}

\def\BibTeX{{\rm B\kern-.05em{\sc i\kern-.025em b}\kern-.08em
    T\kern-.1667em\lower.7ex\hbox{E}\kern-.125emX}}
\begin{document}

\title{\LARGE \bf Variational Encrypted Model Predictive Control}

\author{Jihoon Suh, Yeongjun Jang, Junsoo Kim, and Takashi Tanaka
\thanks{*This work was supported in part by DARPA COMPASS (HR0011-25-3-0210; in part by AFOSR DSCT (FA9550-25-1-0347); and in part by the National Research Foundation of Korea(NRF) grant funded by the Korea government (MSIT) (No. RS-2024-00353032)
}
\thanks{J.~Suh, and T.~Tanaka are with the School of Aeronautics and Astronautics, Purdue University, West Lafayette, IN 47907, USA (e-mail: \{suh95, tanaka16\}@purdue.edu).}
\thanks{Y.~Jang is with ASRI, Department of Electrical and Computer Engineering, Seoul National University, Seoul, 08826, Korea (email: jangyj0512@snu.ac.kr).
}
\thanks{J.~Kim is with the Department of Electrical and Information Engineering,
Seoul National University of Science and Technology, Seoul, 01811, Korea (email: junsookim@seoultech.ac.kr).
}
}

\maketitle
\thispagestyle{empty}
\pagestyle{empty}

\begin{abstract}
We develop a variational encrypted model predictive control (VEMPC) protocol whose online execution relies only on encrypted polynomial operations.
The proposed approach reformulates the MPC problem into a sampling-based estimator, in which the computation of the quadratic cost is naturally handled by tilting the sampling distribution, thus reducing online encrypted computation.
The resulting protocol requires no additional communication rounds or intermediate decryption, and scales efficiently through two complementary levels of parallelism.
We analyze the effect of encryption-induced errors on optimality, and simulation results demonstrate the practical applicability of the proposed method.
\end{abstract}

\section{INTRODUCTION}

Model Predictive Control (MPC) is widely adopted in industrial control systems as it enables explicit constraint handling while providing theoretical guarantees \cite{MuskRawl93}.
Since MPC requires solving an optimization problem at every time step, outsourcing this computation to a cloud server is attractive.
However, transmitting system states, model parameters, or trajectory data to an external server raises privacy concerns.
These data are vulnerable to eavesdropping in transit and to inference by a semi-honest server that executes the prescribed protocol while attempting to learn private information.

Homomorphic encryption (HE) is considered as a potential solution to mitigate these concerns by enabling computation directly on encrypted data, thereby protecting both transmission and computation. However, HE only supports addition and multiplication, whereas solving the underlying MPC typically requires non-polynomial operations (e.g., projection or comparison) to enforce constraints.
For this reason, projected gradient based results have either required intermediate decryption and projection by the client at each iteration \cite{AlexMora18}, or performed only a single gradient step to avoid additional communication rounds, leading to suboptimal control performance \cite{FranPuig24}.
Another line of work precomputes explicit feedback laws offline and evaluates them using HE \cite{DaruRedd18,SchlDaru20,PengXiew26}.
However, this requires identifying the operating region of the current state online through comparison operations, which is delegated to the sensor \cite{DaruRedd18,PengXiew26} or implemented via a two-party protocol \cite{SchlDaru20} involving additional communication.
More recently, \cite{SchlIann25} reformulated MPC as a sequence of penalized unconstrained problems that can be solved using polynomial operations; while it removes excessive communication rounds, its real-time control suitability remains uncertain.

In this paper, we propose a variational approach to encrypted MPC that converts the constrained MPC optimization into a sampling-based estimation problem.
To circumvent the high computational cost of evaluating the quadratic cost term over encrypted data, we \textit{tilt} the reference sampling distribution, allowing it to naturally absorb this cost. 
Consequently, the proposed protocol requires only a single polynomial evaluation per control execution step in the encrypted domain.
The resulting encrypted MPC architecture requires a one-time offline exchange and a single round of communication between the client and cloud at each online execution step, without any intermediate decryption.
Moreover, the cloud side computation can leverage two different levels of parallelism, enabling efficient encrypted MPC execution.
While the variational approach has been applied, for example, in probabilistic inference \cite{HoffBlei13,WainJord08} and stochastic optimal control \cite{Kapp05,Todo06}, to the best of our knowledge, this is the first work to integrate it with encrypted MPC.

\subsection{Main Contributions}
\begin{itemize}[leftmargin=*]
    \item We propose variational encrypted MPC (VEMPC), in which the quadratic cost term is absorbed into a closed-form tilted Gaussian distribution, enabling efficient online execution without projection or iterative optimization.
    
    \item We show that both the sampled control trajectory and its constraint residual are affine functions of the sampling noise, enabling an HE-compatible protocol based on encrypted addition and low-degree polynomial evaluation.

    \item We utilize two complementary levels of parallelism in the VEMPC protocol, one at the plaintext sample-level and the other at the ciphertext-level. Together, these enable efficient online execution within tens of milliseconds (see Table~\ref{tab:ablation}), closing the gap towards real-time implementation.

    \item We provide open-source implementations in Python and Go using OpenFHE-Python \cite{OpenFHE} and Lattigo \cite{lattigo}, two widely adopted and efficient HE libraries.
\end{itemize}

\subsection{Notation}
The sets of real, natural, integers and complex numbers are denoted by $\bbR$, $\bbN$, $\bbZ$ and $\bbC$, respectively. 
For a vector $x \in \bbC^n$, $\|x\|$ denotes the infinity norm.
The Hadamard product (element-wise multiplication) is denoted by $\odot$.
We denote by $\RotVec_\rho(x)$ the cyclic upward shift of $x$ by $\rho$ positions (e.g., $\RotVec_1([1, 2, 3, 4]^\top) = [2, 3, 4, 1]^\top$).
For scalars or vectors $v_1,\ldots,v_n$ , $[v_1;\cdots;v_n] := [v_1^\top, \ldots, v_n^\top]^\top$. Boldface symbols denote encrypted quantities.
Lastly, $[\,\cdot\,]_+ := \max\{\,\cdot\,,0\}$.

\section{Preliminaries: CKKS Cryptosystem}\label{prelim}
We briefly review the Cheon-Kim-Kim-Song (CKKS) cryptosystem \cite{cheon2017homomorphic}, a fully HE scheme that enables approximate arithmetic directly on encrypted data.

The CKKS scheme includes algorithms $\KeyGen$, $\Enc$, and $\Dec$. 
Given a security parameter $\lambda_{\rm sec}\in\bbN$, the key generation algorithm $\KeyGen$ outputs a secret key $\sk$. 
For a plaintext $\mathsf{pt} \in\bbC^{N_{\ckks} /2}$, where $N_{\ckks} \in\bbN$ is a power-of-two, the encryption algorithm takes $\sk$ and outputs a ciphertext $\Enc(\mathsf{pt})\in\calC$ with $\calC$ denoting the ciphertext space. 
The decryption algorithm uses $\sk$ to approximately recover the underlying plaintext, as $\Dec(\Enc(\mathsf{pt}))\approx \mathsf{pt}$.
The approximation error arises from scaling and rounding during encoding and depends primarily on the scaling factor $\Delta>0$ and the security parameter $N_{\ckks}$ (ring dimension).

The primitive homomorphic operations like addition, multiplication, and rotation are represented respectively by
\begin{align*}
    \oplus:\calC\times \calC \to \calC,~~~~\otimes:\calC\times \calC \to \calC,~~~~\RotCt_\rho:\calC\to\calC.
\end{align*}

The following lemma summarizes the growth of errors induced by the approximate homomorphic operations.
For conciseness, the bounds are stated using $\calO$-notation and concrete derivations can be found in \cite{cheon2017homomorphic, cheon2018bootstrapping}.

\begin{lem}\label{lem:homo}
    There exist $\sfB^{\Enc}, \sfB^{\Mult},\sfB^{\Rot} \in \calO(N_\ckks/\Delta)$ for any $\mathsf{pt} \in \bbC^{N_\ckks/2}$, $\bfc, \bfc' \in \calC$ and $\rho\in\bbZ$, such that
        \begin{align*}
            &\left\|\Dec(\Enc(\mathsf{pt})) - \mathsf{pt} \right\| \le \sfB^{\Enc}, \\
            &\left\|\Dec(\bfc \oplus \bfc') - \left(\Dec(\bfc) + \Dec(\bfc')\right) \right\| = 0, \label{eq:homoAdd} \\
            &\left\|\Dec(\Enc(\pt) \otimes \Enc(\pt')) - \pt \odot \pt'\right\| \\
            &\le \!\left(\left\| \pt \right\| \!+\! \left\| \pt' \right\| \right) \sfB^\Enc \!+\! (\sfB^\Enc)^2 \!+\! \sfB^{\Mult}=: \sfB^\otimes(\|\pt\|,\|\pt'\|), 
            \\
            &\left\| \Dec(\RotCt_\rho(\bfc)) - \RotVec_\rho(\Dec(\bfc)) \right\|
            \le \sfB^\Rot.
        \end{align*}
\end{lem}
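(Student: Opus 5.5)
Since the lemma only collects standard CKKS error-growth facts, I will derive each item from the algebraic structure of CKKS ciphertexts, as in \cite{cheon2017homomorphic,cheon2018bootstrapping}, keeping constants inside the $\calO(N_\ckks/\Delta)$ bounds.

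I write a ciphertext as $\bfc=(b,a)\in R_q^2$ with $R=\bbZ[X]/(X^{N_\ckks}+1)$. Decryption computes $\langle \bfc,(1,\sk)\rangle = \Delta m + e \bmod q$, where $m$ is the encoded polynomial. It then divides by $\Delta$ and applies the canonical embedding $\tau$ to recover the slot vector in $\bbC^{N_\ckks/2}$. The norm conversion $\|\tau(p)\|_\infty\le \|p\|_1$, together with standard high-probability bounds on $\|\tau(e)\|_\infty$ for Gaussian or ternary noise (of order $\sqrt{N_\ckks}$ or $N_\ckks$ times the noise deviation), turns any polynomial noise $e$ into a slot error of at most $c\,N_\ckks/\Delta$.

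The items then follow in order.

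\textbf{Encryption.} The encoding rounding error in the coefficient embedding plus the fresh encryption noise $e$, both divided by $\Delta$, give $\sfB^\Enc\in\calO(N_\ckks/\Delta)$.

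\textbf{Addition.} Decryption is linear in $\bfc$, and $\bfc\oplus\bfc'$ is componentwise addition mod $q$. So $\Dec(\bfc\oplus\bfc')=\Dec(\bfc)+\Dec(\bfc')$ exactly, provided no modular wraparound occurs; this holds as long as the plaintexts stay within the modulus budget, which I assume as a standing parameter condition.

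\textbf{Multiplication.} Write $\Dec(\Enc(\pt))=\pt+\epsilon$ and $\Dec(\Enc(\pt'))=\pt'+\epsilon'$ with $\|\epsilon\|,\|\epsilon'\|\le\sfB^\Enc$. The tensor product of the ciphertexts decrypts under $(1,\sk,\sk^2)$ to the slotwise product $(\pt+\epsilon)\odot(\pt'+\epsilon')$. Relinearization with the evaluation key, followed by rescaling by $\Delta$, adds a further noise whose slot bound $\sfB^\Mult$ is $\calO(N_\ckks/\Delta)$; this comes from the key-switching noise divided by the special modulus and the rescaling rounding. Expanding $\pt\odot\epsilon'+\epsilon\odot\pt'+\epsilon\odot\epsilon'$ and applying the triangle inequality with $\|x\odot y\|\le\|x\|\|y\|$ in the infinity norm gives exactly $\sfB^\otimes(\|\pt\|,\|\pt'\|)$.

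\textbf{Rotation.} $\RotCt_\rho$ applies the automorphism $X\mapsto X^{5^\rho}$, which cyclically permutes the slots and maps the secret $\sk$ to $\sk(X^{5^\rho})$, and then key-switches back to $\sk$. The automorphism preserves coefficient norms, so the only new error is the key-switching noise. This gives $\sfB^\Rot\in\calO(N_\ckks/\Delta)$ by the same argument as for relinearization.

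The main obstacle is getting the key-switching noise bound (used in both $\sfB^\Mult$ and $\sfB^\Rot$) into the stated $\calO(N_\ckks/\Delta)$ form. This requires tracking the gadget decomposition, the special modulus $P$, and the embedding norm conversions. I would rely on the explicit estimates in \cite{cheon2017homomorphic,cheon2018bootstrapping}, which state that the key-switching error is $P^{-1}q\cdot\calO(N_\ckks)$ plus a rounding term, and choose $P\ge q$ so that after rescaling this is dominated by $\calO(N_\ckks/\Delta)$.
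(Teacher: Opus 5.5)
Your proposal is correct and takes essentially the same route as the paper. The paper gives no separate proof and simply defers to the standard CKKS noise analysis of \cite{cheon2017homomorphic,cheon2018bootstrapping}; you unpack that same analysis (encoding and fresh noise, exact linearity of decryption, tensor--relinearize--rescale giving the $(\|\pt\|+\|\pt'\|)\sfB^\Enc+(\sfB^\Enc)^2$ cross terms plus $\sfB^\Mult$, and automorphism plus key switching for $\sfB^\Rot$) and rely on those references for the key-switching estimate. The side conditions you make explicit are left implicit in the paper's unconditional statement: no modular wraparound for exact addition, a special modulus $P\ge q$, and the high-probability rather than worst-case nature of the $\calO(N_\ckks/\Delta)$ canonical-embedding bounds.
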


By sequentially composing the operations from Lemma \ref{lem:homo}, one can homomorphically evaluate an arbitrary polynomial \cite{cheon2017homomorphic, cheon2018bootstrapping}. 
With a slight abuse of notation, let $h_\ell(\bfc) \in \calC$ denote the homomorphic evaluation of a polynomial $h_\ell$ on a ciphertext $\bfc\in\calC$. 
The following corollary provides a bound on the resulting error growth.

\begin{cor}\label{cor:poly}
Consider a polynomial\footnote{When applied to a vector, $h_\ell$ acts in a componentwise manner.} $h_\ell(\gamma) = \sum_{k=0}^{\ell} c_k \gamma^k$ of order $\ell\in\bbN$, where $c_k\in\bbC$ for $k=0,\ldots,\ell$.
For any $\bfc\in\calC$,
\begin{multline*}
    \left\| \Dec(h_\ell(\bfc)) - h_\ell(\Dec(\bfc)) \right\| \\
    \le \ell\cdot\sfB^\Mult\max\left(1, \|\Dec(\bfc)\|\right)^{\ell-1}  \textstyle\sum_{k=0}^\ell |c_k| =:\sfB^{\poly}(h_{\ell},\bfc).
\end{multline*}
\end{cor}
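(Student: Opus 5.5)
We will prove the bound by tracking how error accumulates when $h_\ell(\bfc)$ is evaluated homomorphically. The evaluation is taken to use Horner's scheme,
\[
\bfg_\ell = c_\ell,\qquad \bfg_{k} = (\bfg_{k+1}\otimes \bfc)\oplus c_k,\quad k=\ell-1,\ldots,0,
\]
so that $h_\ell(\bfc)=\bfg_0$ and exactly $\ell$ ciphertext multiplications occur, each consuming one level. Addition of the public constants $c_k$ is exact by the addition identity in Lemma~\ref{lem:homo}. Let $x:=\Dec(\bfc)$ and $M:=\max(1,\|x\|)$. Write $p_k(x)=\sum_{j=k}^{\ell}c_jx^{j-k}$ for the exact Horner partials, and $e_k:=\|\Dec(\bfg_k)-p_k(x)\|$ for the errors.

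The key ingredient is a one-step multiplication bound for ciphertexts that are already encrypted. For $\bfc,\bfc'\in\calC$, multiplication (relinearization plus rescaling) satisfies $\|\Dec(\bfc\otimes\bfc')-\Dec(\bfc)\odot\Dec(\bfc')\|\le \sfB^\Mult$. This is the part of $\sfB^\otimes$ in Lemma~\ref{lem:homo} that does not come from fresh encryption. We take it as the meaning of $\sfB^\Mult$ from \cite{cheon2017homomorphic}, rather than re-deriving it.

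With this bound, we get the recursion
\[
e_k\le \|\Dec(\bfg_{k+1})\odot x - p_{k+1}(x)\odot x\| + \sfB^\Mult \le e_{k+1}\|x\| + \sfB^\Mult,\qquad e_\ell=0.
\]
Unrolling it gives $e_0\le \sfB^\Mult\sum_{j=0}^{\ell-1}\|x\|^j\le \ell\,\sfB^\Mult M^{\ell-1}$.

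This estimate lacks the factor $\sum_k|c_k|$ that appears in the statement. The discrepancy depends on how the scalar coefficients are treated, since the paper's statement apparently anticipates scaling by $c_k$ inside the multiplications. Two cases arise:
\begin{itemize}
\item If instead we evaluate the monomials $\bfc^k$ separately (by repeated multiplication, error $\le k\sfB^\Mult M^{k-1}$ by the same recursion) and then form the linear combination $\sum_k c_k\bfc^k$, the total error is $\sum_k |c_k|\,k\,\sfB^\Mult M^{k-1}\le \ell\,\sfB^\Mult M^{\ell-1}\sum_k|c_k|$. This matches $\sfB^{\poly}$ exactly, and it is the version I will present. Here plaintext-constant multiplication is taken exact up to an error absorbed into $\sfB^\Mult$.
\item If the coefficients are folded into Horner's scheme instead, the Horner bound multiplied by $\max(1,\sum|c_k|)$ dominates, which is also fine.
\end{itemize}

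The main obstacle is justifying the ciphertext--ciphertext multiplication error as a bare $\sfB^\Mult$, independent of the magnitudes of the operands, in the precise CKKS sense. In particular, both the scalar-multiplication rounding and the error magnification across levels must be absorbed into the $\calO(N_\ckks/\Delta)$ constant. I would handle this by citing the rescaling analysis in \cite{cheon2017homomorphic,cheon2018bootstrapping} and redefining $\sfB^\Mult$ if needed.
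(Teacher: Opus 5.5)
Your presented argument (build each $\bfc^k$ by repeated multiplication, propagate the error with the same one-step recursion as in your Horner computation, then form the linear combination with the $c_k$) is correct. It is essentially the sequential composition of Lemma~\ref{lem:homo} that the paper invokes without writing out a proof, and you are right that it needs a multiplication bound for non-fresh ciphertexts, measured against $\Dec(\bfc)\odot\Dec(\bfc')$, which Lemma~\ref{lem:homo} does not literally state but the paper implicitly relies on.

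Drop the remark that the Horner variant is ``also fine'': its coefficient-free bound does not imply $\sfB^{\poly}(h_\ell,\bfc)$ when $\sum_k|c_k|<1$. For the same reason, exact constant multiplication and addition is an assumption you share with the paper, not an error that can be absorbed into $\sfB^\Mult$.
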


The parameters $N_{\ckks}$ and $\Delta$ affect not only the error growths in Lemma~\ref{lem:homo} and Corollary~\ref{cor:poly} but also the security level of the scheme. 
Typically, increasing $N_{\ckks}$ or decreasing $\Delta$ strengthens security, potentially at the cost of higher computational cost. 
In this letter, we assume that these parameters are given based on standard security requirements, and focus on integrating the CKKS scheme into our control framework. 
For detailed parameter selection methods, see \cite{cheon2017homomorphic}.

\section{Problem Formulation}
We consider solving a finite-horizon linear–quadratic model predictive control (LQ-MPC) problem in a client–cloud architecture. The objective is to utilize the cloud's computational resources to synthesize the control input, while ensuring that the client's sensitive information such as current state or model parameters remains private to the cloud via encryption.

Consider the discrete-time linear time-invariant system
\begin{equation}\label{eq:sys}
x(t+1) = A x(t) + B u(t),
\end{equation}
where $x(t) \in \mathbb{R}^n$ and $u(t) \in \mathbb{R}^m$ denote the system state and control input at time step $t\in\bbZ$, respectively.

At each MPC update step $t$, the measured state $x(t)$ serves as the initial state of the prediction horizon, denoted by $x_0 := x(t)$. 
The controller plans a sequence of control inputs $U := [u_0; u_1; \cdots; u_{N-1}] \in \mathbb{R}^{Nm}$ over a prediction horizon of length $N\in\bbN$ by solving the following optimization:
\begin{align}\label{eq:lqmpc}
    &\min_{U} J_0(U;x_0) := x_N^\top Q_f x_N + 
    \textstyle\sum_{k=0}^{N-1} \left( x_k^\top Q x_k + u_k^\top R u_k \right),\nonumber \\
    &~~\mbox{s.t.}~GU\le h(x_0), 
\end{align}
where $x_k$ is the predicted state corresponding to $U$ driven by \eqref{eq:sys}, and $Q_f,Q \succeq 0$ and $R \succ 0$ are the weight matrices.
The inequality $G U \le h(x_0)$, with $G \in \mathbb{R}^{p \times Nm}$ and $h(x_0) \in \mathbb{R}^{p}$, compactly represents state and input constraints, where $p\in\bbN$ denotes the total number of inequalities.

From the optimizer, only the first optimal input $u_0$ is applied to the system, and the process repeats at the next time step with the updated state.

We can rewrite the standard LQ-MPC problem \eqref{eq:lqmpc} in a compact quadratic program (QP) form\cite[Ch.~8]{borrelli2017predictive}.
The dynamics \eqref{eq:sys} induces the following prediction model
\begin{equation}
X = \Lambda x_0 + \Psi U,
\end{equation}
where $X = [x_1;\cdots;x_N]\in\bbR^{Nn}$ and the matrices $\Lambda\in\bbR^{Nn\times n}$ and $\Psi\in\bbR^{Nn\times Nm}$ are determined by the system parameters $(A,B)$.
Substituting this model into the cost function yields the compact quadratic cost
\begin{equation}
J_0(U;x_0) = \frac12 U^{\top} H U +  x_0^\top S U + x_0^\top P x_0,
\end{equation}
where the matrices $H\in\bbR^{Nm\times Nm}$, $S\in\bbR^{n\times Nm}$, and $P\in\bbR^{n\times n}$ are determined by $(A,B,Q,R,Q_f)$.

By defining the constraint residual $g(U;x_0) := G U - h(x_0)$, the feasible set can be characterized as 
\begin{equation}\label{eq:feasible_set}
\mathcal{F}(x_0) = \{ U \in \mathbb{R}^{Nm} : g(U;x_0) \le 0 \}.
\end{equation}
Dropping the constant term $x_0^\top P x_0$ yields the equivalent QP:
\begin{equation}\label{eq:mpc_qp_compact}
\min_{U \in \mathcal{F}(x_0)} \quad J_Q(U;x_0) := \frac12 U^\top H U + x_0^\top S U .
\end{equation}

\section{A Variational Approach to Encrypted MPC}
Directly adopting standard methods for solving QP can be difficult because iterative and branching-based algorithms are not well suited for HE arithmetic.
Instead, we employ a variational approach that leads to a HE-compatible, sampling-based algorithm to estimate the solution of the QP \eqref{eq:mpc_qp_compact}. 

\subsection{Variational Reformulation}\label{subsec:variational}
Define the extended-value penalty on the feasible set~\eqref{eq:feasible_set}:
\begin{equation}\label{eq:phi}
\phi(U;x_0) := 
    \begin{cases}
    0, & U\in \mathcal{F}(x_0),\\
    +\infty, & \text{otherwise}.
    \end{cases}
\end{equation}
Using this penalty, the compact QP~\eqref{eq:mpc_qp_compact} can be expressed as an unconstrained problem via the extended value functional
\begin{equation}\label{eq:hard_unconstrained}
    \min_{U \in \mathbb{R}^{Nm}} \mathcal{J}(U;x_0):= J_Q(U;x_0) + \phi(U;x_0).
\end{equation}

To obtain a sampling-based representation of \eqref{eq:hard_unconstrained}, we employ a variational formulation over the space of probability distributions.
Let $\kappa$ and $\kappa_0$ be probability distributions over $\mathbb{R}^{Nm}$, where $\kappa_0$ serves as a reference distribution.
Provided that $\kappa$ is absolutely continuous with respect to $\kappa_0$, the Kullback-Leibler (KL) divergence of $\kappa$ from $\kappa_0$ is defined as
\begin{equation}
D(\kappa\|\kappa_0) := \int \log\left(\frac{d\kappa}{d\kappa_0}(U)\right)\,\kappa(dU).
\end{equation}

\begin{lem}[Variational formula \cite{boue1998variational}]
For any $\lambda > 0$,
    \begin{equation}\label{eq:variational_formula}
        \min_{\kappa} \left\{\mathbb{E}_{U\sim\kappa}[\mathcal J(U; x_0)] + \lambda D(\kappa\|\kappa_0)\right\} = -\lambda \log{Z},
    \end{equation}
    where
    $Z:=\mathbb{E}_{U\sim\kappa_0}[\exp(-\mathcal J(U;x_0)/\lambda)]$ 
    and the minimizer is
    \[
        \kappa^{*}(U) = \frac{\kappa_0(U)\exp(-\mathcal J(U; x_0) / \lambda)}{Z}.
    \]
    Moreover, the corresponding point estimate is given by
    \begin{equation}\label{eq:U_hat_def}
        \hat U(x_0) := \mathbb{E}_{U\sim\kappa^*}[U] = \frac{\mathbb{E}_{U\sim\kappa_0}\!\left[U \exp(-\mathcal J(U; x_0) / \lambda)\right]}{\mathbb{E}_{U\sim\kappa_0}\!\left[\exp(-\mathcal J(U; x_0) / \lambda)\right]}.
        \end{equation}
\end{lem}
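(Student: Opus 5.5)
The plan is the standard Gibbs variational (Donsker--Varadhan) argument: rewrite the objective as a single KL divergence to $\kappa^*$ plus a constant, then use nonnegativity of KL.

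First I would check that $\kappa^*$ is well defined, i.e.\ $0<Z<\infty$. Because $\phi\ge 0$, we have $\exp(-\mathcal J/\lambda)\le \exp(-J_Q/\lambda)$. Since $H\succ 0$ (it inherits this from $R\succ 0$), $J_Q$ is bounded below, so $Z<\infty$. For $Z>0$ I need $\kappa_0(\mathcal F(x_0))>0$. This holds, for example, when $\kappa_0$ is Gaussian and $\mathcal F(x_0)$ has nonempty interior, and I would state it as a standing assumption. The density $d\kappa^*/d\kappa_0 = e^{-\mathcal J/\lambda}/Z$ then lies in $L^1(\kappa_0)$ with integral one, so $\kappa^*$ is a probability measure that vanishes outside $\mathcal F(x_0)$.

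Next, fix any $\kappa\ll\kappa_0$ with $\mathbb E_\kappa[\mathcal J]<\infty$. Outside this class the objective equals $+\infty$, either because $\kappa$ charges the infeasible set or because $D=\infty$, so only this class matters. For such $\kappa$ we have $\kappa(\mathcal F^c)=0$, hence $\kappa\ll\kappa^*$. The chain rule for Radon--Nikodym derivatives then gives
\begin{equation*}
\log\frac{d\kappa}{d\kappa_0} = \log\frac{d\kappa}{d\kappa^*} - \frac{\mathcal J}{\lambda} - \log Z \quad \kappa\text{-a.s.}
\end{equation*}
Integrating against $\kappa$ and multiplying by $\lambda$ yields
\begin{equation*}
\mathbb E_\kappa[\mathcal J] + \lambda D(\kappa\|\kappa_0) = \lambda D(\kappa\|\kappa^*) - \lambda\log Z .
\end{equation*}

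Gibbs' inequality gives $D(\kappa\|\kappa^*)\ge 0$, with equality if and only if $\kappa=\kappa^*$. Hence the minimum value is $-\lambda\log Z$, and it is attained uniquely at $\kappa^*$. The point estimate \eqref{eq:U_hat_def} follows directly by writing $\mathbb E_{\kappa^*}[U]=\mathbb E_{\kappa_0}[U\,d\kappa^*/d\kappa_0]$. This is finite because $\|U\|e^{-J_Q/\lambda}$ is integrable under a Gaussian (or any light-tailed) $\kappa_0$.

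The main obstacle is the measure-theoretic bookkeeping around the extended-valued $\mathcal J$. Specifically, integrating the log-density identity requires that no $\infty-\infty$ forms arise. I would handle this by first restricting to $\kappa$ with finite objective, as above. On that class I would verify that $\log(d\kappa/d\kappa^*)$ has an integrable negative part, which follows from the standard bound $\int[\log f]_-\,d\kappa\le \int f^{-1}\mathbf 1_{f<1}\,d\kappa\le 1$ with $f=d\kappa/d\kappa^*$. Everything else is routine.
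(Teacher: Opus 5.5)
The paper does not prove this lemma; it imports it by citation from \cite{boue1998variational}. Your argument is correct. It is the standard decomposition $\mathbb{E}_\kappa[\mathcal J]+\lambda D(\kappa\|\kappa_0)=\lambda D(\kappa\|\kappa^*)-\lambda\log Z$ followed by Gibbs' inequality, which underlies such representations. What the citation buys the paper is brevity and a link to the path-space representation that motivates sampling-based control. However, the main result in that reference concerns bounded measurable functionals of Brownian motion, whereas here $\mathcal J$ is an unbounded, extended-valued function on $\mathbb{R}^{Nm}$. Your self-contained route handles this directly. In doing so, it makes explicit a hypothesis that the paper's ``for any $\lambda>0$'' leaves implicit: one needs $\kappa_0(\mathcal F(x_0))>0$. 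Otherwise $Z=0$ and $\kappa^*$ is undefined, while the infimum is $+\infty$. Your route also shows, via $H\succ 0$, that $Z<\infty$ for every probability measure $\kappa_0$, not only the Gaussian chosen later.

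Two small tightenings. First, to conclude that the minimum is attained, you must apply your identity at $\kappa=\kappa^*$, so you should check that $\kappa^*$ lies in your admissible class. This holds because $\mathbb{E}_{\kappa^*}[\mathcal J]=Z^{-1}\int_{\mathcal F(x_0)}J_Q\,e^{-J_Q/\lambda}\,d\kappa_0<\infty$, since $t\mapsto t\,e^{-t/\lambda}$ is bounded on $[\inf J_Q,\infty)$. Second, the light-tail assumption in your last step is unnecessary. Since $H\succ 0$, the map $U\mapsto\|U\|e^{-J_Q(U;x_0)/\lambda}$ is bounded, so $\hat U(x_0)$ is finite for any $\kappa_0$, matching the lemma's generality. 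Also, the complement of your class contains measures supported in $\mathcal F(x_0)$ with $\mathbb{E}_\kappa[J_Q]=\infty$, a case your parenthetical omits. The objective is $+\infty$ there as well, so your reduction stands.
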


By drawing $K\in\bbN$ i.i.d. samples $U^{(i)}\sim \kappa_0$ for $i=1,\ldots,K$, we can approximate \eqref{eq:U_hat_def} via Monte Carlo:
\begin{equation}\label{eq:U_hat}
    \hat U_K(x_0) = \frac{\sum_{i=1}^{K} U^{(i)} \exp(-\mathcal J(U^{(i)}; x_0) / \lambda)}{\sum_{i=1}^{K} \exp(-\mathcal J(U^{(i)}; x_0) / \lambda)}.
\end{equation}
By the law of large numbers, $\hat U_K(x_0) \to \hat U(x_0)$ as $K \to \infty$, and $\kappa^*$ clearly concentrates on $\arg\min \mathcal J(U;x_0)$ as $\lambda \to 0$.

\subsection{Exponential Tilting and Efficient Encrypted Sampling}\label{subsec:reference_dist}
The estimator \eqref{eq:U_hat} involves evaluating the extended-value cost $\mathcal{J}(U; x_0)$ for $K$ trajectory samples.
Homomorphically computing this trajectory cost online is expensive because evaluating the quadratic term $U^\top H U$, for example, requires at least $Nm$ calls of $\otimes$ for encrypted matrix-vector multiplication with the diagonal-based method of~\cite[Section~4.3]{HaleShou14}.

In what follows, we propose a method to circumvent the computation of this expensive quadratic term. 
This is achieved by applying a closed-form transformation to the reference sampling distribution, allowing it to naturally absorb the quadratic term.
Consequently, the resulting estimator can be evaluated without costly encrypted matrix-vector multiplications, leading to a highly efficient online protocol.

To facilitate the closed-form transformation, we set the reference distribution $\kappa_0$ to be a zero-mean Gaussian with covariance $\Sigma_0 \succ 0$
\begin{equation}
    \kappa_0=\mathcal N(0,\Sigma_0).
\end{equation}
The zero-mean assumption is adopted on $\kappa_0$ for simplicity but the same derivation extends to a general Gaussian $\mathcal N(m_0,\Sigma_0)$.

\begin{thm} \label{thm:tilt}
    The exponentially tilted distribution defined by $\tilde \kappa(dU) \propto \exp(-J_Q(U;x_0)/\lambda)\kappa_0(dU)$ is a Gaussian distribution $\tilde\kappa = \mathcal N\left(m_U(x_0),\,\Sigma_U\right)$ with
    \begin{subequations}\label{eq:thm1}
     \begin{align}
        m_U(x_0) &:= -\frac{1}{\lambda}\,\Sigma_U S^\top x_0, \\
        \Sigma_U &:=\left(\Sigma_0^{-1}+\frac{1}{\lambda}H\right)^{-1}.\label{eq:sigmaU}
    \end{align}
    \end{subequations}
\end{thm}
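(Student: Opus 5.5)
The plan is to compute the unnormalized density of $\tilde\kappa$ explicitly and complete the square in $U$. Since $\kappa_0=\mathcal N(0,\Sigma_0)$ has density proportional to $\exp(-\tfrac12 U^\top\Sigma_0^{-1}U)$ with respect to Lebesgue measure, the tilted measure has density
\begin{equation*}
\tilde\kappa(U)\propto \exp\!\Big(-\tfrac12 U^\top \Sigma_0^{-1} U - \tfrac{1}{2\lambda}U^\top H U - \tfrac{1}{\lambda}x_0^\top S U\Big)
= \exp\!\Big(-\tfrac12 U^\top \Sigma_U^{-1} U - \tfrac1\lambda x_0^\top S U\Big),
\end{equation*}
where $\Sigma_U^{-1}:=\Sigma_0^{-1}+\tfrac1\lambda H$ as in \eqref{eq:sigmaU}. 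All factors independent of $U$ are absorbed into the proportionality constant.

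The first step is to check that $\Sigma_U$ is well defined and positive definite. $H$ is the Hessian of $J_0$ in $U$, namely $H=2(\bar R+\Psi^\top\bar Q\Psi)$ with $\bar R$ block-diagonal in $R\succ0$ and $\bar Q$ block-diagonal in $Q,Q_f\succeq0$. Hence $H\succeq 0$, indeed $H\succ0$, and $\Sigma_0^{-1}\succ0$ with $\lambda>0$ gives $\Sigma_0^{-1}+H/\lambda\succ0$. This is the one point needing care. It guarantees that the exponent is a strictly concave quadratic, so the normalizer $\int \exp(-J_Q/\lambda)\,d\kappa_0$ is finite and positive, and the tilted distribution is a genuine probability measure.

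The second step is to complete the square. Write $-\tfrac1\lambda x_0^\top S U = U^\top\Sigma_U^{-1} m$ with $m:=-\tfrac1\lambda\Sigma_U S^\top x_0$, which works because $\Sigma_U$ is symmetric. Then
\begin{equation*}
-\tfrac12 U^\top\Sigma_U^{-1}U + U^\top\Sigma_U^{-1}m = -\tfrac12 (U-m)^\top\Sigma_U^{-1}(U-m) + \tfrac12 m^\top\Sigma_U^{-1}m,
\end{equation*}
and the last term is constant in $U$.

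The density of $\tilde\kappa$ is therefore proportional to a Gaussian kernel with mean $m=m_U(x_0)$ and covariance $\Sigma_U$. Since a probability density is determined by its kernel up to normalization, $\tilde\kappa=\mathcal N(m_U(x_0),\Sigma_U)$. The main obstacle is only the positive definiteness and integrability check; the rest is routine algebra. The same argument extends to a nonzero reference mean $m_0$ by adding $\Sigma_0^{-1}m_0$ to the linear term.
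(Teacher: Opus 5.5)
Your proposal is correct and follows the same route as the paper. Both combine the Gaussian and tilting exponents into $-\tfrac12 U^\top(\Sigma_0^{-1}+\tfrac1\lambda H)U-\tfrac1\lambda x_0^\top S U$, complete the square, and absorb the $U$-independent terms into the normalizer. You also verify explicitly that $\Sigma_0^{-1}+\tfrac1\lambda H\succ 0$, using $H=2(\bar R+\Psi^\top\bar Q\Psi)$ with $R\succ 0$, so that the tilted measure is normalizable; this is a worthwhile detail the paper leaves implicit.
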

\begin{proof}
Multiplying $\exp(-J_Q(U;x_0)/\lambda)$ with the Gaussian density $\kappa_0=\mathcal N(0,\Sigma_0)$ yields an exponent of the form
\begin{align*}
   &-\frac{J_Q(U;x_0)}{\lambda} - \frac{1}{2}U^\top \Sigma_0^{-1}U\\
   &= 
-\frac12 U^\top \left(\Sigma_0^{-1}+\frac{1}{\lambda}H\right) U
-\frac{1}{\lambda}(S^\top x_0)^\top U. 
\end{align*}
Completing the square for $U$ reveals that this exponent corresponds to $\mathcal N(m_U(x_0), \Sigma_U)$, with all terms independent of $U$ absorbed into the normalization constant.
This concludes the proof.
\end{proof}

Note that the exponential weight appearing in \eqref{eq:U_hat_def} can be decomposed as
\begin{equation}
\exp\left(-\tfrac{\mathcal J(U;x_0)}{\lambda}\right) = \exp\left(-\tfrac{J_Q(U;x_0)}{\lambda}\right) r(U;x_0),
\end{equation}
where the feasibility weight $r(U;x_0) := \exp(-\phi(U;x_0)/\lambda)$ equals one when $U\in\mathcal F(x_0)$ and zero otherwise.

By applying the change of measure from $\kappa_0$ to $\tilde{\kappa}$, it follows from Theorem~\ref{thm:tilt} that the quadratic cost term can be naturally absorbed into the tilted distribution, thus simplifying \eqref{eq:U_hat_def} as
\begin{align*}
    \hat U(x_0) =\frac{\mathbb{E}_{U\sim\tilde{\kappa}}\left[U  r(U;x_0) \right]}{\mathbb{E}_{U\sim\tilde{\kappa}}\left[r(U;x_0)\right]}.
\end{align*}
As a result, the estimator \eqref{eq:U_hat} can be rewritten as 
\begin{equation}\label{eq:U_hat_mc_tilted}
    \hat U_K(x_0) = \frac{\sum_{i=1}^{K} U^{(i)} r(U^{(i)};x_0)}{\sum_{i=1}^{K} r(U^{(i)}; x_0)},
\end{equation}
where $U^{(i)}\sim\tilde{\kappa}$ for $i=1,\ldots, K$. Since $H$ and $S$ are known, \eqref{eq:thm1} can be computed explicitly and sampling from $\tilde{\kappa}$ is easy.

\subsection{Polynomial Surrogate of Feasibility}\label{subsec:poly_approx}

The feasibility weight $r(U; x_0)$ in \eqref{eq:U_hat_mc_tilted} is a hard indicator of constraint satisfaction. 
Therefore, $r(U; x_0)$ cannot be directly evaluated under HE because it requires comparison and branching, i.e., determining the sign of the constraint residual. Moreover, it is discontinuous and can not be represented by a finite degree polynomial.
For these reasons, we replace this hard indicator with a polynomial surrogate that can be evaluated using encrypted arithmetic.

\subsubsection{Aggregate violation score}

Recall that an input trajectory $U\in\bbR^{Nm}$ is feasible if and only if the constraint residual $g(U; x_0)$ satisfies $g_j(U;x_0)\le 0$ for all components $j=1, \ldots, p$.
Motivated by this characterization, we quantify the constraint violation via the following aggregate score
\begin{equation}\label{eq:violation_measure}
    s(U;x_0) := \textstyle\sum_{j=1}^{p} \bigl[g_j(U;x_0)\bigr]_+.
\end{equation}
Feasible samples yield $s(U;x_0)=0$, while infeasible samples incur a positive score proportional to the total violation.

\subsubsection{Chebyshev polynomial approximation of $[\,\cdot\,]_+$}
Let $h_\ell: \bbR\to\bbR$ denote a degree-$\ell$ polynomial of the form $h_\ell(\gamma) := \textstyle\sum_{k=0}^{\ell} c_k\,\gamma^k$ obtained via Chebyshev approximation \cite{KhanMich23} of $[\cdot]_+$ on the interval $[-B_\ell,B_\ell]$ for some $B_\ell > 0$.
Then, there exists a uniform bound $\delta_\ell>0$ such that
\begin{align}\label{eq:unifBound}
    \left | [\gamma]_+ - h_\ell(\gamma) \right| \le \delta_\ell, ~~~~ \forall \gamma \in [-B_\ell,B_\ell].
\end{align}

With the polynomial $h_\ell$, we have the surrogate score
\begin{equation}\label{eq:violation_poly}
    s_\ell(U;x_0) := \textstyle\sum_{j=1}^{p} h_\ell \left(g_j(U;x_0)\right).
\end{equation}
Let $\mathcal{D}_{B_\ell}(x_0) := \{U : \|g(U;x_0)\|_\infty \le B_\ell\}$ be the approximation domain.
Since $\tilde\kappa(\cdot\mid x_0)$ is Gaussian, the bound $B_\ell$ can be chosen such that $\|g(U^{(i)};x_0)\|_\infty \le B_\ell$ with high probability.
The following lemma is immediate from \eqref{eq:unifBound}.

\begin{lem}\label{lem:poly_error}
    For all $U\in\mathcal{D}_{B_\ell}(x_0)$, 
    \begin{align*}
        |s(U;x_0) - s_\ell(U;x_0)| \le p\,\delta_\ell.
    \end{align*}
\end{lem}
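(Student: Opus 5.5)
The plan is to reduce the claim to the scalar uniform bound \eqref{eq:unifBound}, applied once per constraint and then combined with the triangle inequality. No probabilistic or encryption argument is needed: the statement is purely deterministic on the approximation domain $\mathcal{D}_{B_\ell}(x_0)$.

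First, fix an arbitrary $U\in\mathcal{D}_{B_\ell}(x_0)$. By definition of this domain, $\|g(U;x_0)\|_\infty\le B_\ell$. Hence every component satisfies $g_j(U;x_0)\in[-B_\ell,B_\ell]$ for $j=1,\ldots,p$. This is the only place membership in the domain is used, and it guarantees that each residual component lies in the interval where the Chebyshev approximation is valid.

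Second, write the difference of the two scores termwise using \eqref{eq:violation_measure} and \eqref{eq:violation_poly}:
\begin{equation*}
s(U;x_0)-s_\ell(U;x_0)=\textstyle\sum_{j=1}^{p}\bigl([g_j(U;x_0)]_+-h_\ell(g_j(U;x_0))\bigr).
\end{equation*}
Applying the triangle inequality bounds the absolute value of this sum by $\sum_{j=1}^p\bigl|[g_j(U;x_0)]_+-h_\ell(g_j(U;x_0))\bigr|$.

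Third, apply \eqref{eq:unifBound} with $\gamma=g_j(U;x_0)$, which is admissible by the first step. Each summand is then at most $\delta_\ell$, and summing the $p$ terms yields $p\,\delta_\ell$. Since $U$ was arbitrary in $\mathcal{D}_{B_\ell}(x_0)$, the bound holds on the whole domain.

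There is no genuine obstacle here. The only point requiring care is the first step: checking that the $\infty$-norm condition defining $\mathcal{D}_{B_\ell}(x_0)$ places every component of $g(U;x_0)$ inside the interval on which \eqref{eq:unifBound} holds. The remaining steps are bookkeeping.
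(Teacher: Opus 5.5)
Your proof is correct and follows the paper's approach; the paper gives no written proof and calls the lemma immediate from \eqref{eq:unifBound}. Your argument is exactly that: domain membership places each $g_j(U;x_0)$ in $[-B_\ell,B_\ell]$, \eqref{eq:unifBound} bounds each of the $p$ summands by $\delta_\ell$, and the triangle inequality gives $p\,\delta_\ell$.
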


\subsubsection{Threshold correction and final estimator}

Small approximation errors in $s_\ell$ can accidentally penalize feasible samples.
To correct this, we introduce the thresholded score
\begin{equation}\label{eq:s_bar}
    \bar{s}_\ell(U;x_0) := \max\bigl\{s_\ell(U;x_0) - \tau_s,\;0\bigr\},
\end{equation}
where $\tau_s\ge 0$ is a tunable parameter.
The corresponding feasibility weight is defined as
\begin{equation}\label{eq:rp_def}
    \bar{r}_\ell(U;x_0) := \exp\bigl(-\eta\,\bar{s}_\ell(U;x_0)\bigr),
\end{equation}
where $\eta>0$ is chosen to be sufficiently large, so that $\bar{r}_\ell(U;x_0)$ closely approximates $r(U;x_0)$.

The following corollary states that an appropriate choice of $\tau_s$ guarantees that feasible samples are not penalized.
\begin{cor}\label{cor:threshold}
    If
    $\tau_s \ge p\,\delta_\ell$ then $\bar{r}_\ell(U;x_0) = 1$
    for every feasible $U \in \mathcal{D}_{B_\ell}(x_0)$.
\end{cor}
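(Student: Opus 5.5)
The plan is to show that, for a feasible $U$ in the approximation domain, the thresholded score $\bar s_\ell(U;x_0)$ is zero. The conclusion $\bar r_\ell(U;x_0)=\exp(0)=1$ then follows directly from the definition \eqref{eq:rp_def}. So everything reduces to proving $s_\ell(U;x_0)\le \tau_s$ whenever $U\in\mathcal F(x_0)\cap\mathcal D_{B_\ell}(x_0)$.

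First, feasibility gives $g_j(U;x_0)\le 0$ for every $j=1,\ldots,p$. Hence $[g_j(U;x_0)]_+=0$ for each $j$, and the aggregate violation score \eqref{eq:violation_measure} satisfies $s(U;x_0)=0$.

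Second, since $U\in\mathcal D_{B_\ell}(x_0)$, Lemma~\ref{lem:poly_error} applies and yields $|s(U;x_0)-s_\ell(U;x_0)|\le p\,\delta_\ell$. With $s(U;x_0)=0$ this gives $s_\ell(U;x_0)\le p\,\delta_\ell\le\tau_s$, where the last inequality is the hypothesis of the corollary. Therefore $s_\ell(U;x_0)-\tau_s\le 0$, and by \eqref{eq:s_bar} we get $\bar s_\ell(U;x_0)=\max\{s_\ell(U;x_0)-\tau_s,0\}=0$.

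There is no real obstacle here. The only point to check is that Lemma~\ref{lem:poly_error} needs each residual $g_j(U;x_0)$ to lie in $[-B_\ell,B_\ell]$, and this is exactly the membership $U\in\mathcal D_{B_\ell}(x_0)$ assumed in the statement. Only the one-sided bound $s_\ell\le s+p\delta_\ell$ is used, so a possibly negative value of $s_\ell$ (the approximation $h_\ell$ can dip below zero) causes no trouble: it is also absorbed by the $\max$.
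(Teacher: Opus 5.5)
Your proof is correct and takes the same route the paper intends. The paper gives no separate proof and presents the corollary as an immediate consequence of Lemma~\ref{lem:poly_error}, the fact that $s(U;x_0)=0$ for feasible $U$, and the definitions of $\bar{s}_\ell$ and $\bar{r}_\ell$, which is exactly your chain $s_\ell(U;x_0)\le p\,\delta_\ell\le\tau_s \Rightarrow \bar{s}_\ell(U;x_0)=0 \Rightarrow \bar{r}_\ell(U;x_0)=1$.
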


Replacing $r(U^{(i)};x_0)$ in \eqref{eq:U_hat_mc_tilted} with $\bar{r}_\ell^{(i)} := \bar{r}_\ell(U^{(i)};x_0)$ gives the final estimator
\begin{equation}\label{eq:U_hat_mc_poly}
    \hat{U}_{K,\ell}(x_0) := \frac{\sum_{i=1}^K U^{(i)}\,\bar{r}_\ell^{(i)}}{\sum_{i=1}^K \bar{r}_\ell^{(i)}},
\end{equation}
whose numerator and denominator consists of polynomial operations, aside from thresholding and exponentiation.

\subsection{Variational Encrypted MPC Protocol}
Algorithms~\ref{alg:vempc_offline} and~\ref{alg:vempc_online} summarize the client–cloud protocol realizing the proposed VEMPC. The protocol separates computation into an offline and an online phase.

\subsubsection{Offline preprocessing}

The offline phase precomputes quantities that depend only on the MPC parameters and the reference sampling distribution. 
Let $L_U\in\bbR^{Nm\times Nm}$ denote the Cholesky factor of $\Sigma_U$ in \eqref{eq:sigmaU}, i.e.,
$\Sigma_U = L_U L_U^\top$. 
Then, for some $\xi^{(i)} \sim \mathcal{N}(0, I)$, any sample $U^{(i)}\!\sim \tilde{\kappa}$ admits the following affine representation
\begin{equation}\label{eq:Ui}
    U^{(i)} = m_U(x_0) + L_U \xi^{(i)}
\end{equation}
Similarly, the constraint residual can be written as
\begin{equation}
    g(U^{(i)};x_0) = b(x_0) + \Gamma \xi^{(i)},
\end{equation}
where $b(x_0) := G m_U(x_0) - h(x_0)\in\bbR^p$ and  $\Gamma := G L_U\in\bbR^{p\times Nm}$.

Since both $L_U$ and $\Gamma$ are independent of the state $x_0$, the client encrypts and transmits $\Enc(L_U)$ and $\Enc(\Gamma)$ to the cloud during the offline phase.
The cloud then generates Gaussian noise samples $\xi^{(i)} \sim \mathcal N(0,I)$, for $i=1,\ldots, K$, and precomputes the following encrypted quantities\footnote{With slight abuse of notation, $\Enc$ and $\otimes$ denote matrix encryption and homomorphic matrix-vector multiplication, respectively. Both can be implemented via the vector-level operations in Section~\ref{prelim}; see \cite[Section~4.3]{HaleShou14}.}
\begin{align} \label{eq:enc_pert}
    \bfc_{LU}^{(i)} &:= \Enc(L_U) \otimes \Enc(\xi^{(i)}),\\
    \bfc_\Gamma^{(i)} &:=  \Enc(\Gamma)\otimes \Enc(\xi^{(i)}). \nonumber
\end{align}

Since these quantities depend only on the system model and not on the current state, this preprocessing is performed once. Moreover, these precomputed values can be cached for use during the online phase.

\begin{algorithm}[t]
\caption{Offline Protocol}
\label{alg:vempc_offline}
\begin{algorithmic}[1]
\Require MPC parameters $(H,G)$, reference covariance $\Sigma_0 \succ 0$, temperature $\lambda > 0$, number of samples $K$
\Ensure Cached ciphertexts $\{\bfc_{LU}^{(i)},\, \bfc_\Gamma^{(i)}\}_{i=1}^K$ stored at cloud 
\renewcommand{\algorithmicrequire}{\# \textbf{Client (trusted)}}
\Require
\State Compute $\Sigma_U := (\Sigma_0^{-1} + \tfrac{1}{\lambda}H)^{-1}$
\State Factorize $\Sigma_U = L_U L_U^\top$ and compute $\Gamma := G L_U $
\State Transmit \{$\Enc(L_U),\Enc(\Gamma)$\} to the cloud

\renewcommand{\algorithmicrequire}{\# \textbf{Cloud (untrusted)}}
\Require
\For{$i = 1,\ldots,K$}
    \State Sample $\xi^{(i)}\sim \calN(0,I)$
    \State $\bfc_{LU}^{(i)} \leftarrow \mathsf{Enc}(L_U) \otimes \Enc(\xi^{(i)})$
    \State $\bfc_\Gamma^{(i)} \leftarrow  \Enc(\Gamma)\otimes \Enc(\xi^{(i)})$
\EndFor
\end{algorithmic}
\vspace{-2pt}
\end{algorithm}

\subsubsection{Online execution}
At each MPC update step $t$, with $x_0 := x(t)$, the client encrypts and transmits $\mathsf{Enc}(m_U(x_0))$ and $\mathsf{Enc}(b(x_0))$ to the cloud.
Then, for each $i=1,\ldots,K$, the cloud retrieves the cached quantities $\bfc_{LU}^{(i)}$ and $\bfc_\Gamma^{(i)}$ to form the encrypted trajectory samples
\begin{equation}
    \bfU^{(i)} = \mathsf{Enc}(m_U(x_0)) \oplus \bfc_{LU}^{(i)},
\end{equation}
together with the encrypted constraint residuals
\begin{equation}
    \bfg(U^{(i)};x_0) = \mathsf{Enc}(b(x_0)) \oplus \bfc_\Gamma^{(i)}.
\end{equation}

To evaluate $s_\ell(U^{(i)};x_0)$ homomorphically, the cloud first applies the polynomial surrogate $h_\ell$ to $\bfg(U^{(i)};x_0)$, followed by a standard sum-reduction procedure \cite[Section~4.1]{HaleShou14} that iteratively rotates and adds the ciphertext to compute the aggregated score, which we denote by $\bfs_\ell(U^{(i)};x_0)$.

The cloud returns $\{\bfU^{(i)},\,\bfs_\ell(U^{(i)};x_0) \}_{i=1}^{K}$ back to the client. 
After decryption, the client computes the thresholded scores and corresponding desirability weights to construct the estimator $\tilde U_{K, \ell}(x_0)$, as in Lines~\ref{line:start}--\ref{line:end} of Algorithm~\ref{alg:vempc_online}. 
Finally, the first control input is applied to the system.

\begin{algorithm}[t]
\caption{Online Protocol}
\label{alg:vempc_online}
\begin{algorithmic}[1]
\Require Current state $x(t)$, threshold $\tau_s>0$, and $\eta>0$
\Ensure Control input $u(t)$

\renewcommand{\algorithmicrequire}{\# \textbf{Client (trusted)}}
\Require
\State $x_0 \gets x(t)$
\State Compute $m_U(x_0) = -\frac{1}{\lambda}\Sigma_U S^\top x_0$
\Statex and $b(x_0)=Gm_U(x_0)-h(x_0)$
\State Transmit \{$\Enc(m_U(x_0)),\Enc(b(x_0))$\} to the cloud

\renewcommand{\algorithmicrequire}{\# \textbf{Cloud (untrusted)}}
\Require

\For{$i = 1,\ldots,K$}
\State $\bfU^{(i)} \leftarrow \mathsf{Enc}(m_U(x_0)) \oplus \bfc_{LU}^{(i)}$
\State $\bfg(U^{(i)};x_0) \leftarrow \mathsf{Enc}(b(x_0)) \oplus \bfc_\Gamma^{(i)}$ 
\State $\bfs_\ell(U^{(i)};x_0) \leftarrow \sum_{j=1}^p \RotCt_{j-1}(h_\ell(\bfg(U^{(i)};x_0)))$
\EndFor
\State Return $\{\bfU^{(i)},\, \bfs_\ell(U^{(i)};x_0)\}_{i=1}^K$ to the client

\renewcommand{\algorithmicrequire}{\# \textbf{Client (trusted)}}
\Require
\For{$i = 1,\ldots,K$}\label{line:start}
    \State $\tilde{U}^{(i)}\gets \Dec(\bfU^{(i)})$
    \State $\check{s}_\ell^{(i)} \gets [\Dec(\bfs_\ell(U^{(i)};x_0))]_1$ \, ($[\cdot]_1$, first element)
    \State $\tilde{s}_\ell(U^{(i)};x_0) \gets \max\left(\check{s}_\ell(U^{(i)};x_0) - \tau_s,\, 0\right)$
    \State Compute weight $\tilde{r}_\ell^{(i)} = \exp(-\eta\, \tilde{s}_\ell(U^{(i)};x_0))$
\EndFor
\State Compute $\tilde{U}_{K,\ell}(x_0) := \dfrac{\sum_{i=1}^{K} \tilde{U}^{(i)} \tilde{r}_\ell^{(i)}}{\sum_{i=1}^{K} \tilde{r}_\ell^{(i)}}$ \label{line:end}
\State $u(t)\gets [I_m, 0]\tilde{U}_{K,\ell}(x_0)$
\end{algorithmic}
\end{algorithm}

\subsubsection{Encryption Error}
Since CKKS supports approximate arithmetic, the proposed protocol computes noisy versions of the trajectory samples and feasibility weights. The following result shows that encrypted computation within the protocol produces bounded errors propagated by homomorphic computations of the plaintext algorithm.
\begin{thm}\upshape\label{thm:errBound}
    For any fixed $x_0\in\bbR^n$ and $U^{(i)}\in\bbR^{Nm}$ defined as in \eqref{eq:Ui}, there exist $\sfB^U,\sfB^s \in \calO(N_\ckks/\Delta)$ such that
    \begin{align}
        \left\|\tilde{U}^{(i)} -  U^{(i)} \right\| &\le \sfB^U \label{eq:errU}\\
        \left\| \tilde{s}_\ell(U^{(i)};x_0) - \bar{s}_\ell(U^{(i)};x_0) \right\| & \le  \sfB^s . \label{eq:errs}
    \end{align}
\end{thm}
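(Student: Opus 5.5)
The plan is to trace the online and offline protocols step by step. At each homomorphic operation we invoke Lemma~\ref{lem:homo} or Corollary~\ref{cor:poly}, and then check that every accumulated term is a finite combination of quantities in $\calO(N_\ckks/\Delta)$ whose coefficients depend only on fixed data. The fixed data are $x_0$, $\xi^{(i)}$, $L_U$, $\Gamma$, $m_U(x_0)$, $b(x_0)$, $\ell$, the coefficients $c_k$ and $p$. Since $x_0$ and the sample are fixed, all plaintext norms such as $\|L_U\xi^{(i)}\|$ and $\|b(x_0)\|$ are constants. We therefore only need to show that the errors are of order $\calO(N_\ckks/\Delta)$ with these constants absorbed.

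For \eqref{eq:errU}, we first bound the offline ciphertext $\bfc_{LU}^{(i)}$. Following \cite[Section~4.3]{HaleShou14}, the diagonal method writes $L_U\xi^{(i)}=\sum_{d}\mathrm{diag}_d(L_U)\odot\RotVec_d(\xi^{(i)})$. Each term uses one rotation, with error $\sfB^\Rot$, and one multiplication, with error $\sfB^\otimes(\|\mathrm{diag}_d(L_U)\|,\|\xi^{(i)}\|)$, plus encryption errors $\sfB^\Enc$. Additions are exact by Lemma~\ref{lem:homo}, so $\|\Dec(\bfc_{LU}^{(i)})-L_U\xi^{(i)}\|\le \sum_d\bigl(\sfB^\otimes(\cdot,\cdot)+\|\mathrm{diag}_d(L_U)\|\sfB^\Rot\bigr)=:\sfB^{LU}$. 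This is $\calO(N_\ckks/\Delta)$, because $\sfB^\otimes$ is linear in $\sfB^\Enc$ and $\sfB^\Mult$ up to the quadratic term $(\sfB^\Enc)^2$, and that term is dominated for the parameter regime $N_\ckks/\Delta<1$. Exactness of $\oplus$ then gives $\|\tilde U^{(i)}-U^{(i)}\|\le \sfB^\Enc+\sfB^{LU}=:\sfB^U$. The same argument gives $\|\Dec(\bfg(U^{(i)};x_0))-g(U^{(i)};x_0)\|\le \sfB^\Enc+\sfB^{\Gamma}=:\sfB^g$.

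For \eqref{eq:errs}, we split the error of $h_\ell(\bfg)$ by the triangle inequality:
\[
\|\Dec(h_\ell(\bfg))-h_\ell(g)\|\le \|\Dec(h_\ell(\bfg))-h_\ell(\Dec(\bfg))\|+\|h_\ell(\Dec(\bfg))-h_\ell(g)\|.
\]
The first term is at most $\sfB^{\poly}(h_\ell,\bfg)$ by Corollary~\ref{cor:poly}, with $\|\Dec(\bfg)\|\le\|g\|+\sfB^g$ bounded. For the second term we use the mean value theorem on the polynomial $h_\ell$: its derivative is bounded by $\sum_k k|c_k|M^{k-1}$ on the ball of radius $M:=\max(1,\|g\|+\sfB^g)$, which gives a bound $C_\ell\sfB^g$. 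The sum-reduction applies $p-1$ rotations and exact additions. The rotated quantities are ciphertexts that were already perturbed, so each $\RotCt$ adds at most $\sfB^\Rot$, and the first slot of the output deviates from $s_\ell(U^{(i)};x_0)$ by at most $p$ times the per-slot error plus $(p-1)\sfB^\Rot$. Finally, the map $z\mapsto\max(z-\tau_s,0)$ is $1$-Lipschitz, so the same bound transfers to $\|\tilde s_\ell-\bar s_\ell\|$; we define this bound as $\sfB^s$.

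The main obstacle is making the $\calO(N_\ckks/\Delta)$ claim honest. Corollary~\ref{cor:poly} contains the factor $\max(1,\|\Dec(\bfc)\|)^{\ell-1}$, and $\sfB^\otimes$ contains $(\sfB^\Enc)^2$. We would handle both by noting that for fixed $x_0$ and $\xi^{(i)}$ these factors are constants uniformly bounded once $\sfB^g\le 1$. That condition holds for all sufficiently small $N_\ckks/\Delta$, and for those values the quadratic terms are at most linear. For the remaining finite range we simply enlarge the constant, so that both $\sfB^U$ and $\sfB^s$ are $\calO(N_\ckks/\Delta)$ with constants depending on $\ell$, $p$, $\|c\|_1$, $x_0$ and $\xi^{(i)}$.
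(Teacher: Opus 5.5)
Your proposal is correct and follows essentially the same route as the paper: bound the matrix--vector products $\bfc_{LU}^{(i)}$ and $\bfc_\Gamma^{(i)}$ via Lemma~\ref{lem:homo}, use exactness of $\oplus$, split the polynomial error by the triangle inequality into the Corollary~\ref{cor:poly} term plus a Lipschitz term, and add the rotation errors from the sum-reduction. You are in fact more careful than the paper's proof on several points: it reuses $\sfB^U$ for the $\Gamma$-residual, omits the factor $p$ on the per-slot polynomial error when summing $p$ slots, never notes that $z\mapsto\max(z-\tau_s,0)$ is $1$-Lipschitz, and asserts the $\calO(N_\ckks/\Delta)$ scaling without addressing the $(\sfB^\Enc)^2$ and $\max(1,\|\Dec(\bfc)\|)^{\ell-1}$ factors.
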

\begin{proof}
    First, we establish the bound in \eqref{eq:errU}. 
Using the method in \cite[Sec.~4.3]{HaleShou14}, evaluating $\bfc_{LU}^{(i)}$ requires performing $Nm$ homomorphic multiplications followed by $Nm$ homomorphic rotations. 
Consequently, the decryption error is bounded as
\begin{align}\label{eq:matVecErr}
\!\!\!\!\left\|  \Dec(\bfc_{LU}^{(i)}\!) \!-\! L_U\xi^{(i)} \right\| \!&=\! \left\| \Dec(\Enc(L_U) \!\otimes\! \Enc(\xi^{(i)})\!) \!-\! L_U\xi^{(i)} \right\| \nonumber\\
&\!\!\!\!\!\!\!\le Nm \left( \sfB^\otimes(\|L_U\|,\|\xi^{(i)}\|) + \sfB^\Rot \right)
\end{align}
by Lemma~\ref{lem:homo}.
Further applying Lemma~\ref{lem:homo} and \eqref{eq:matVecErr} yields
\begin{align*}
    &\left\| \tilde{U}^{(i)} \!-\! U^{(i)} \right\| \!=\! \left\| (\Dec(\Enc(m_U(x_0))) + \Dec(\bfc_{LU}^{(i)})) \!-\! U^{(i)}\right\| \\
    &\le \sfB^\Enc +  Nm\cdot \Big( \sfB^\otimes(\|L_U\|,\|\xi^{(i)}\|) + \sfB^\Rot \Big)=:\sfB^U.
\end{align*}
Since $(N,m,L_U,\xi^{(i)})$ are fixed, we have $\sfB^U\in\calO(N/\Delta)$.

For \eqref{eq:errs}, we omit dependence on $U^{(i)}$ and $x_0$ for brevity.
Using a similar argument to \eqref{eq:matVecErr}, $ \left\|\Dec(\bfg) - g \right\| \le \sfB^U$.
Combining this with Corollary~\ref{cor:poly} results in
\begin{align*}
    &\left\| \Dec(h_l(\bfg)) - h_l(g)\right\| \\
    & \le \left\| \Dec(h_l(\bfg)) - h_l(\Dec(\bfg))\right\| + \left\|  h_l(\Dec(\bfg)) - h_l(g)\right\| \\
    & \le \sfB^\poly(h_l,\bfg) + L_h B^U,
\end{align*}
where $L_h>0$ denotes the Lipschitz constant of $h_l$ on $[\left\| g\right\|_\infty - \sfB^U,\left\| g\right\|_\infty + \sfB^U]$.
Because the sum-reduction protocol iteratively applies homomorphic addition and rotation, it accumulates at most $p\sfB^\Rot$ error.
Therefore, defining $\sfB^s:=p\sfB^\Rot+ \sfB^\poly(h_l,\bfg) + L_h B^U$ satisfies the condition in \eqref{eq:errs}, and this concludes the proof.
\end{proof}
In particular, the error magnitude scales with the security parameter and  the CKKS precision parameter, by $\calO(N_\ckks/\Delta)$, implying the trade-off.

\subsection{Performance of the VEMPC Protocol}\label{subsubsec:performance}
The proposed protocol admits two complementary levels of parallelism: \emph{plaintext-level} parallelism across trajectory samples (from sampling-based algorithm) and \emph{ciphertext-level} parallelism enabled by Single-Instruction-Multiple-Data (SIMD) packing (from CKKS scheme).

\subsubsection{Plaintext-level}
The trajectory samples $U^{(i)}$ are independent and can be evaluated concurrently.
If $n_{\mathrm{workers}}\in\bbN$ number of processing units are available, the samples can be partitioned into batches so that each worker processes approximately $K_{\mathrm{worker}} = K/n_{\mathrm{workers}}$ samples.

\subsubsection{Ciphertext-level}\label{subsubsec:cipher}
The SIMD packing allows multiple plaintext values to be embedded into the slots of a single ciphertext, and processed simultaneously.

In the proposed protocol, a worker may process one ciphertext containing
$K_{\mathrm{worker}}$ trajectory samples packed as
\[
[U^{(1)};\cdots;U^{(K_{\mathrm{worker}})}], 
\quad \; U^{(i)} \in \mathbb{R}^{Nm},\;\; i=1,\ldots,K_{\mathrm{worker}},
\]
where each $U^{(i)}$ denotes the $i$-th control trajectory sample.
The corresponding residual vectors can also be packed similarly, allowing each worker to process them simultaneously with a similar computation time needed for a single sample.

The VEMPC protocol exploits these two levels of parallelism simultaneously. SIMD-packing enables multiple samples in a single ciphertext, while ciphertext batches can be processed concurrently by multiple workers. If $n_{\mathrm{workers}}$ workers process ciphertexts, each containing $K_{\mathrm{worker}}$ samples, the total number of trajectory samples evaluated at each MPC step is $K = n_{\mathrm{workers}}K_{\mathrm{worker}}$.

\section{Numerical Example}

We applied the proposed VEMPC protocol to an inverted pendulum system \cite{FranPowe86}.
By linearizing and discretizing the system with the sampling period of \SI{50}{\milli\s}, the model of the form \eqref{eq:sys} is obtained as
\begin{align*}
    A = 
    \begin{bmatrix}
        1.0246 & 0.0504 \\
        0.9890 & 1.0246
    \end{bmatrix}, ~~~ 
    B =
    \begin{bmatrix}
        0.0251 \\
        1.0082
    \end{bmatrix},
\end{align*}
where the parameters are set as $m=\SI{0.2}{\kg}$ (mass), $l=\SI{0.5}{\m}$ (length), and $g=\SI{9.81}{\m/\s^2}$ (gravitational acceleration). 
The state $x(t)=:[\theta(t);\dot{\theta}(t)]$ consists of the angular position $\theta(t)$ and angular velocity $\dot{\theta}(t)$ of the pendulum, and the input $u(t)$ represents the applied torque force. The initial state was set to $x(0)=[0.3;0.1]$.

For the MPC formulation, we set the prediction horizon as $N=10$ and the cost matrices as
\begin{align*}
    Q=
    \begin{bmatrix}
        50, & 0 \\
        0, & 5
    \end{bmatrix}, ~~~~ 
    Q_f = 2Q, ~~~~ R=0.1.
\end{align*}
The state and input constraints were defined as $\|\theta(t) \|\le 0.5$, $\|\dot{\theta}(t)\| \le 0.8$, and $\|u(t)\| \le 1$, resulting in a total of $p=60$ constraints.
For reference distribution and sampling, we set $\Sigma_0 = 0.25^2 I$, $\lambda=0.1$, and $K=240$.

Encryption parameters were set to $\{N_\ckks, \Delta\}=\{2^{13}, 2^{30}\}$, with the remaining parameters\footnote{Omitted here for brevity; refer to our implementations fully available at https://github.com/jsuh9/Variational-Encrypted-Model-Predictive-Control.} chosen to satisfy $128$-bit security \cite{AlbrChas21}. The Chebyshev polynomial order was set to $\ell=3$.
Implementing the parallelism from Section~\ref{subsubsec:performance} with $n_{\rm workers}=4$, the average online computation time for generating the actuator input at each time step was $\SI{28.662}{\milli\s}$ (mean over $T=40$ steps; $2$-second simulation shown in Fig.~\ref{fig:traj}), which falls within the sampling time. All simulation presented in this section was run on an Apple M5 laptop (4 Performance cores, \SI{24}{\giga\byte} unified memory).

The VEMPC protocol (Algorithms~\ref{alg:vempc_offline}--\ref{alg:vempc_online}) exhibits similar performance to the unencrypted variational MPC \eqref{eq:U_hat_mc_tilted}, effectively driving the state to the origin while satisfying the constraints, as shown in Fig.~\ref{fig:traj}.

\begin{table}[t]
  \centering
  \caption{Online time (mean $\pm$ std, \si{\milli\second}) for varying $\ell$, $N_{\ckks}$, or $K$. 
  }
  \label{tab:ablation}
  \setlength{\tabcolsep}{3pt}
  \renewcommand{\arraystretch}{1.0}
  \begin{tabular*}{\columnwidth}{@{\extracolsep{\fill}}
      c
      S[table-format=2.2(1.2), separate-uncertainty=true, table-align-uncertainty=true]
      S[table-format=2.2(1.2), separate-uncertainty=true, table-align-uncertainty=true]
      @{\hspace{6pt}}
      S[table-format=2.2(1.2), separate-uncertainty=true, table-align-uncertainty=true]
      S[table-format=2.2(1.2), separate-uncertainty=true, table-align-uncertainty=true]
    @{}}
    \toprule
    & \multicolumn{2}{c}{$\log N_{\ckks} = 13$}
    & \multicolumn{2}{c}{$\log N_{\ckks} = 14$} \\
    \cmidrule(lr){2-3}\cmidrule(lr){4-5}
    {$\ell$} & {$K=136$} & {$K=272$} & {$K=272$} & {$K=544$} \\
    \midrule
    3 & 28.47(1.76) & 30.65(2.31) & 60.39(1.92) & 62.66(4.50) \\
    4 & 33.61(1.25) & 32.73(3.34) & 69.47(4.07) & 71.91(5.99) \\
    5 & 37.59(5.55) & 35.09(1.36) & 75.27(4.18) & 76.82(2.59) \\
    \bottomrule
  \end{tabular*}
\end{table}

The dependence of the online computation time on protocol parameters such as polynomial degree $\ell$, ring dimension $N_{\rm ckks}$, and sample size $K$ is examined in Table~\ref{tab:ablation}. The computation time grows with $\ell$ because a higher degree requires more ciphertext multiplications, and with $N_{\rm ckks}$ because a larger ring dimension increases the size of each ciphertext. Ciphertext-level parallelism (Section~\ref{subsubsec:cipher}) is confirmed by the observation that computation times remain nearly constant for $K=136$ and $K=272$ under the same $\ell$ and $N_{\rm ckks}$, since the number of ciphertext multiplications and the ciphertext size are unchanged.



\begin{figure*}[!t]
    \centering
    \includegraphics[width=0.98\textwidth, height=30mm, keepaspectratio=false]{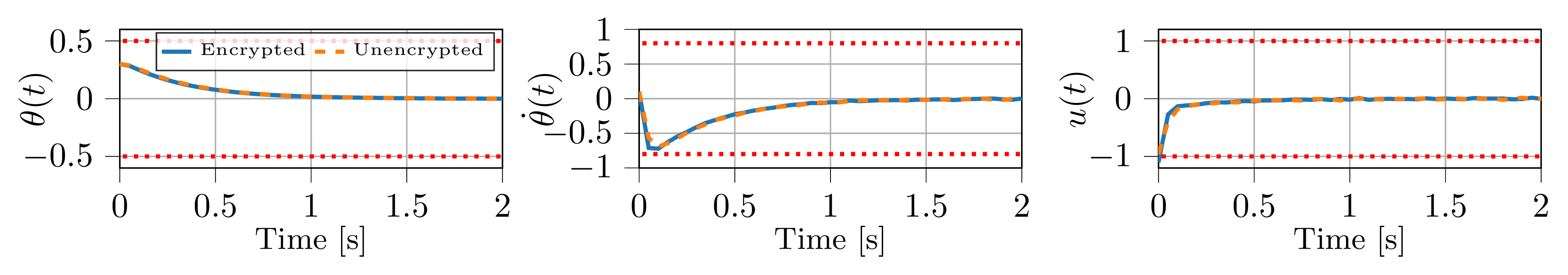}
    \vspace{-0.35cm}
    \caption{Comparison of unencrypted variational MPC (orange dashed) and VEMPC (blue solid) subject to constraints (red dotted). }
    \label{fig:traj}
\end{figure*}

\section{Conclusion} \label{final}
The difficulty of encrypted optimization (and hence encrypted MPC) stems from forcing solvers designed for cheap branching and iteration into an encrypted arithmetic where both are expensive.
A variational approach overcomes this incompatibility, since the resulting sample-based algorithm reduces encrypted computation to polynomial evaluations native to HE.
Exponential tilting eliminates the expensive quadratic cost from online computation, and together with two-level parallelism, the VEMPC protocol brings the control execution time to tens of milliseconds at $128$-bit security.

\addtolength{\textheight}{-4cm}




\bibliographystyle{IEEEtran}
\bibliography{IEEEabrv, vempc}

\end{document}